\documentclass{article}

\usepackage{microtype}
\usepackage{graphicx}
\usepackage{subcaption}
\usepackage{booktabs} 

\usepackage{hyperref}

\usepackage[preprint]{icml2026}

\usepackage{amsmath}
\usepackage{amssymb}
\usepackage{mathtools}
\usepackage{amsthm}
\usepackage{bm}
\usepackage{xcolor}

\usepackage[capitalize,noabbrev]{cleveref}

\theoremstyle{plain}
\newtheorem{theorem}{Theorem}[section]
\newtheorem{proposition}[theorem]{Proposition}

\theoremstyle{definition}

\theoremstyle{remark}
\newtheorem{remark}{Remark}[section]

\newcommand{\argmin}[1]{\underset{#1}{\operatorname{argmin}}\;}

\definecolor{darkgreen}{rgb}{0.0, 0.5, 0.0}

\usepackage[textsize=tiny]{todonotes}

\icmltitlerunning{Joint Nuclear and $\ell_1$ Regularization for Logistic Matrix Regression}

\begin{document}

\twocolumn[
  \icmltitle{Joint Nuclear and $\ell_1$ Regularization for Logistic Matrix Regression with Applications to Brain Imaging}
%
%


  \icmlsetsymbol{equal}{*}

  \begin{icmlauthorlist}
    \icmlauthor{Damian Brzyski}{jag}
    \icmlauthor{Aaron Cohen}{bloom}
    \icmlauthor{Zijian Wang}{bloom}
    \icmlauthor{Mario Dzemidzic}{alc,rad}
    \icmlauthor{David A. Kareken}{alc}
    \icmlauthor{Jaroslaw Harezlak}{bloom}
  \end{icmlauthorlist}

  \icmlaffiliation{jag}{Center for Quantitative Research in Political Science, Jagiellonian University, Poland}
  \icmlaffiliation{bloom}{Department of Epidemiology and Biostatistics, Indiana University Bloomington, Bloomington, IN, USA}
  \icmlaffiliation{rad}{Department of Radiology and Imaging Sciences, Indiana University School of Medicine, Indianapolis, IN, USA}
  \icmlaffiliation{alc}{Department of Neurology, Indiana University School of Medicine, Indiana Alcohol Research Center, Indianapolis, IN, USA}
  
  \icmlcorrespondingauthor{Damian Brzyski}{damian.brzyski@uj.edu.pl}
  \icmlcorrespondingauthor{Jaroslaw Harezlak}{harezlak@iu.edu}

  \icmlkeywords{Logistic Regression, Scalar-On-Matrix Regression, Sparse and Low-Rank Regularization, ADMM, Brain Imaging, L1 Penalty, Nuclear Norm, Convex Optimization}

  \vskip 0.3in
]



\printAffiliationsAndNotice{}  

\begin{abstract}
    We introduce a new convex optimization framework for logistic scalar-on-matrix regression which incorporates nuclear and $\ell_1$ norm penalties to enforce simultaneously low-rank and sparse structures in the estimated coefficient matrix. The proposed method enables interpretable modeling of high-dimensional matrix-valued predictors in the presence of binary responses. We derive a custom algorithm based on the Alternating Direction Method of Multipliers (ADMM) to efficiently solve the resulting convex optimization problem and establish the theoretical properties of the obtained solution. Numerical experiments clearly demonstrate the effectiveness of our method in recovering meaningful predictive patterns.
    Finally, we apply our method to the brain imaging data to identify structures in functional brain connectivity matrices that are characteristic of subjects with a family history of alcohol use disorders (AUDs).
\end{abstract}

\section{Introduction}
In high-dimensional learning problems, particularly when modeling a scalar outcome from a matrix-valued input, placing suitable structural constraints is crucial for reliable estimation. Two widely used structures are \textit{sparsity}, which assumes only a small subset of features is relevant, and \textit{low-rankness}, which assumes model parameters lie near a low-dimensional subspace. Many real-world problems exhibit both properties simultaneously. For example, in face recognition, an image can be decomposed into a low-rank component that captures illumination and overall structure plus a sparse component for occlusions or noise \cite{wright2009robust}. In multi-task learning, a coefficient matrix for $K$ related tasks may have a shared low-dimensional basis while utilizing only a subset of features per task \cite{chen2011integrating}. In brain imaging, neurological conditions might affect a limited subset of brain connections that reflect a few underlying network-level patterns \cite{brzyski2024spinner}. These examples illustrate the appeal of combining sparsity and low-rankness to obtain parsimonious yet predictive models.

Researchers have developed two main paradigms to impose joint low-rank and sparse constraints on a matrix of interest. One approach is a \textit{decomposed model} that explicitly specifies $B = L + S$, where $L$ is low-rank and $S$ is sparse. Convex relaxations of this idea penalize $L$ and $S$ separately via the nuclear norm and the $\ell_1$ norm. Chandrasekaran et al.~(\citeyear{chandrasekaran2012latent}) decomposed covariance matrices into low-rank latent components plus sparse conditional dependencies. Farne et al.~(\citeyear{farne2023high}) used nuclear plus $\ell_1$ framework for covariance estimation, showing gains over single-structure approaches. Other examples include \cite{Zhang2017LRSRR, Bertsimas2023SLR, Tu2022LRSfMRI}.

Another approach is \textit{unified regularization}, which imposes both a nuclear norm and a sparsity-inducing norm (e.g., $\ell_1$ or $\ell_{2,1}$) directly on a single matrix $B$. Golbabaee et~al.\ (\citeyear{golbabaee2012compressed}) combined nuclear and group-sparse $\ell_{2,1}$ norms for compressed sensing of matrices that are both low-rank and jointly sparse. More recently, Lu et~al.\ (\citeyear{lu2023sparse}) proposed a unified sparse--low-rank framework for matrix quantile estimation in quadratic regression. Sparse trace-norm regularization within a single convex formulation was also investigated in \cite{Chen2014SparseTraceNorm} and \cite{Richard2012Estimation}.

Strong theoretical guarantees show that jointly enforcing sparsity and low-rankness can remain statistically robust. Bunea et~al. (\citeyear{bunea2012joint}) proved that a low-rank-plus-sparse estimator in multivariate regression achieves near-oracle prediction error. Chakraborty et~al (\citeyear{chakraborty2019bayesian}) showed that Bayesian priors inducing low-rank and row-sparse structure attain near-minimax posterior contraction without sacrificing efficiency. Ma et~al. (\citeyear{ma2020adaptive}) established near-optimal finite-sample rates for two-way (row- and column-sparse) reduced-rank regression, while Tan et~al. (\citeyear{tan2023robust}) demonstrated that adding a Huber loss preserves robustness under heavy-tailed noise.

Despite this progress, one important scenario has received limited attention, the scalar-on-matrix regression, where each input $X$ is a matrix (e.g., an image or an adjacency matrix) and the output $y$ is a scalar. Most existing sparse plus low-rank techniques focus on multi-output regression or unsupervised matrix recovery; very few methods are tailored to the single-output, matrix-input case. One of the exceptions is the \textsc{SpINNEr} method by Brzyski et~al.\ (\citeyear{brzyski2024spinner}), which imposes nuclear and $\ell_1$ penalties in a linear regression with brain connectivity matrices as predictors. 

SpINNEr optimizes a squared‑error loss for continuous outcomes, and extending it to binary responses is non‑trivial due to the non‑quadratic nature of the logistic loss. To the best of our knowledge, no existing work addresses this \emph{classification} problem in the scalar‑on‑matrix setting under joint low‑rank and sparsity constraints. We bridge this gap by proposing \textit{logistic SpINNEr}, a convex regularized method for binary scalar‑on‑matrix regression. Our approach penalizes logistic loss with both a nuclear norm and an $\ell_1$ norm on the coefficient matrix $B$, encouraging $B$ to be simultaneously low‑rank and sparse. To solve the resulting optimization problem, we develop an efficient algorithm utilizing the alternating direction method of multipliers (ADMM) combined with the iteratively reweighted least squares (IRLS), enabling scalable computation in high dimensions.

Our major contributions are as follows.
\begin{enumerate}
    \item New scalar-on-matrix classification framework, logistic SpINNEr, which incorporates joint nuclear and $\ell_1$ regularization in the coefficient matrix.
    \item Tailored ADMM-plus-IRLS optimization algorithm enabling efficient high-dimensional estimation.
    \item Synthetic experiments showing that, under the considered binary-response setting, logistic SpINNEr outperforms SpINNEr, and methods using only one structural penalty.
    \item Summary of discoveries from our real-data application to brain imaging.
\end{enumerate}

The remainder of the paper is organized as follows. Section~2 formalizes the problem setup, introduces the logistic SpINNEr model, and presents theoretical properties of the estimator under the symmetric predictors' scenario. Section~3 details the optimization procedure, outlining the ADMM algorithm and the IRLS step used to solve one of the ADMM subproblems. Section~4 presents the results of the numerical simulations. Section~5 reports the real-data analysis in a neuroimaging dataset, comparing our method with relevant baselines and illustrating how the joint structure yields interpretable patterns. Finally, Section~6 concludes with a discussion of implications and future directions.

\section{Method}
\subsection{Logistic Regression Model for Matrix Predictors}
We consider a logistic scalar-on-matrix regression model with $n$ independent observations. For each subject $i = 1, \dots, n$, let $A_i \in \mathbb{R}^{p \times p}$ denote the matrix-valued predictor and $y_i \in \{0,1\}$ the binary response. Additionally, let $X \in \mathbb{R}^{n \times d}$ be the matrix of covariates, where the $i$-th row, $X_i$, corresponds to the covariates for subject $i$. Matrix $X$ may include demographic or clinical variables such as age, sex, or disease status. An intercept is always included in the model, which is reflected by a column of ones in $X$. We define the signal matrix $B \in \mathbb{R}^{p \times p}$ and the covariate coefficient vector $\beta \in \mathbb{R}^d$. We define the linear predictor as 
\begin{equation}
    \eta_i(B, \beta) := \langle A_i, B \rangle + X_i \beta,
\end{equation} where $\langle A_i, B \rangle = \mathrm{tr}(A_i^\top B)$ denotes the Frobenius inner product. For notational simplicity, we write $\eta_i$ in place of $\eta_i(B, \beta)$ in what follows. We assume that the binary response variable $y_i$ follows a Bernoulli distribution with success probability $p_i = \big(1 + \exp(-\eta_i)\big)^{-1}$. The corresponding log-likelihood function is given by
\begin{equation}
\label{logLikelihood}
\ell(B, \beta) = \sum_{i=1}^n \big[\, y_i \eta_i - \ln\big(1 + \exp(\eta_i)\big)\, \big].
\end{equation}
In the proposed logistic regression model, $\beta$ represents the linear effects of the additional covariates on the response, while $B$ encodes the contribution of the matrix-valued predictor $A_i$ and captures structured patterns in $A_i$ that are associated with the response. We assume that $B$ is both low-rank and sparse. The low-rank assumption reflects the presence of structures that are jointly associated with the outcome, and the sparsity assumption implies that only a subset of the matrix elements is associated with the outcome.

\subsection{Logistic SpINNEr Formulation}
%
To estimate the model parameters introduced in the previous section in a way that promotes both sparsity and low-rank structure in the matrix component, we define the \textit{Logistic SpINNEr} estimator as the solution to the following optimization problem:
\begin{equation}
\label{LogisticSpinner}
\big(\hat{B}^{LS}, \hat{\beta}^{LS}\big) := \arg\min_{B,\, \beta} F(B, \beta),
\end{equation}
where the objective function is given by
\[
F(B, \beta) := -\ell(B, \beta) + \lambda_N \|B\|_* + \lambda_L \|W \circ B\|_1.
\]
Here, \( \ell(B, \beta) \) is the log-likelihood function defined in \eqref{logLikelihood}, \( \|\cdot\|_* \) denotes the nuclear norm, and \( \|\cdot\|_1 \) is the entrywise \( \ell_1 \)-norm. The term \( W \circ B \) represents the Hadamard (elementwise) product between matrices \( W \) and \( B \), where \( W \in \mathbb{R}^{p \times p} \) is a given matrix with non-negative entries. The matrix \( W \) encodes prior structural knowledge about the relative importance or penalization of individual elements in \( B \). The non-negative regularization parameters \( \lambda_N \) and \( \lambda_L \) control the balance between the low-rankness and sparsity of the solutions, respectively.
\begin{proposition}
\label{PropSolExistence}
Let $\lambda_N \geq 0$, $\lambda_L \geq 0$ and $W$ be matrix of nonnegative weights.
Define the set of two-dimensional indices
\begin{equation*}
\mathcal{I} := \left\{\, (i,j):\ \lambda_L \cdot W(i,j) + \lambda_N = 0 \,\right\}.
\end{equation*}
Assume that the data are not separable with respect to $\mathcal{I}$, i.e., there \textbf{does not exist} $(\beta, B) \in \mathbb{R}^d \times \mathbb{R}^{p \times p}$ such that $\operatorname{supp}(B) \subset \mathcal{I}$ and the following two conditions are met \newline 
(a)\ \ for all $i = 1, \ldots, n$
\begin{equation}
\label{separabilityCond}
\left\{
\begin{aligned}
y_i = 0  &\implies \langle A_i, B \rangle + X_i \beta \leq 0 \\
y_i = 1  &\implies \langle A_i, B \rangle + X_i \beta \geq 0
\end{aligned}
\right.,
\end{equation}
(b)\ \ there exists $i_0$ such that
\begin{equation}
\label{separabilityCond2}
\begin{aligned}
\big\{\,y_{i_0} = 0\ \ \textrm{and}\ \ \langle A_{i_0}, B\rangle + X_{i_0}\beta < 0\,\big\}\ \quad  \textrm{or}\\
\big\{\,y_{i_0} = 1\ \ \textrm{and}\  \ \langle A_{i_0}, B\rangle + X_{i_0}\beta > 0\,\big\}.
\end{aligned}
\end{equation}
Then, there exists a solution to the optimization problem defined in~\eqref{LogisticSpinner}.
\end{proposition}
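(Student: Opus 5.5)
The plan is to show that $F$ is convex, continuous and coercive on $\mathbb{R}^{p\times p}\times\mathbb{R}^d$. Existence of a minimizer then follows from the Weierstrass theorem applied to a nonempty compact sublevel set. Continuity and convexity are immediate: $-\ell$ is a sum of the convex functions $\eta\mapsto \ln(1+e^{\eta})-y\eta$ composed with affine maps, and both penalties are norms or seminorms. Each loss term is nonnegative, so $F\ge 0$. The real work is coercivity, or more precisely that the recession function of $F$ is positive in every nonzero direction. For a closed proper convex function, this condition is equivalent to the sublevel sets being bounded.

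For a direction $(B,\beta)\neq 0$, I will compute the asymptotic slope $F_\infty(B,\beta)=\lim_{t\to\infty}F(tB,t\beta)/t$. Writing $\eta_i=\langle A_i,B\rangle+X_i\beta$, the $i$-th loss term $\ln(1+e^{t\eta_i})-y_it\eta_i$ grows with slope $\max(\eta_i,0)-y_i\eta_i$. This equals $\eta_i^+$ if $y_i=0$ and $\eta_i^-$ if $y_i=1$. Hence
\[
F_\infty(B,\beta)=\sum_{i:y_i=0}\eta_i^+ + \sum_{i:y_i=1}\eta_i^- + \lambda_N\|B\|_* + \lambda_L\|W\circ B\|_1 .
\]
Every summand is nonnegative, so I need to show that $F_\infty(B,\beta)=0$ with $(B,\beta)\neq 0$ is impossible.

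\textbf{Case $\lambda_N>0$.} Here $\mathcal I=\emptyset$. Vanishing of the penalty forces $B=0$, and the condition $\operatorname{supp}(B)\subset\mathcal I$ then holds trivially.

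\textbf{Case $\lambda_N=0$.} Vanishing of $\lambda_L\|W\circ B\|_1$ forces $B(i,j)=0$ wherever $\lambda_LW(i,j)>0$, that is, $\operatorname{supp}(B)\subset\mathcal I$.

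In both cases, vanishing of the loss part gives $\eta_i\le 0$ when $y_i=0$ and $\eta_i\ge 0$ when $y_i=1$, which is condition (a). The non-separability assumption therefore rules out (b), so $\eta_i=0$ for all $i$.

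This is the main obstacle: such a direction has $F_\infty=0$ and yet may be nonzero. It occurs, for instance, when $X$ lacks full column rank or when some $A_i$-direction supported in $\mathcal I$ is orthogonal to all $A_i$. So strict coercivity can fail, and I will handle it by quotienting out these directions. Let $\mathcal L$ be the set of $(B,\beta)$ with $\operatorname{supp}(B)\subset\mathcal I$ and $\eta_i(B,\beta)=0$ for all $i$. This is a linear subspace. Along $\mathcal L$, both the loss and the penalty are invariant: the penalty vanishes there, and if $\lambda_N>0$ then $\mathcal I=\emptyset$, so $B=0$. The relevant entries of $B$ are thus unpenalized, and $F(v+u)=F(v)$ for all $u\in\mathcal L$.

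To close the argument, I will restrict $F$ to the orthogonal complement $\mathcal L^\perp$. There the recession function is strictly positive on nonzero directions, by the argument above applied to directions in $\mathcal L^\perp\setminus\{0\}$. Hence $F|_{\mathcal L^\perp}$ is coercive and attains its minimum at some $v^\star$. By the invariance, $v^\star$ minimizes $F$ over the whole space, which proves existence of a solution to \eqref{LogisticSpinner}.

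Two details remain to be checked, both routine. The first is the slope limit for the logistic terms, using $\ln(1+e^{s})=s^++O(e^{-|s|})$. The second is the standard fact that, for a closed convex function with $F_\infty>0$ on all nonzero directions, the nonempty sublevel sets are bounded.
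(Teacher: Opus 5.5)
Your proof is correct and follows essentially the same route as the paper. The paper uses the same case split (whether or not $\operatorname{supp}(B)\subset\mathcal I$) and the same use of non-separability to force $\eta_i\equiv 0$, concludes that every direction of recession of $F$ is a direction of constancy, and then cites the Rockafellar/Bertsekas theorem for attainment of the minimum; your restriction to $\mathcal L^\perp$ is just an explicit proof of that cited theorem here, and your closed-form $F_\infty$ makes the paper's somewhat informal limit argument in the first case cleaner.
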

%
\begin{proof}
    The proof is provided in Appendix~\ref{subsec:Prop21}
\end{proof}
\begin{remark}
In the special case where $\lambda_N >0$ and the covariate matrix $X$ consists of a single column of ones -- i.e., only an intercept is included in the model -- the condition that the data are not separable reduces to the requirement that the binary response vector \( y \) contains at least one observation from each class.
\end{remark}

\subsection{Symmetric Predictor Matrices}
One special case, which we give particular attention to in our work, arises when all predictor matrices \(A_i \in \mathbb{R}^{p \times p}\) are symmetric. This setting is especially relevant in neuroimaging applications, since connectivity matrices—whether structural or functional—typically represent undirected relationships between brain regions. In this context, the nonzero entries of the signal matrix \(B\) can be interpreted as weights (regression coefficients) given to the edges in an undirected graph, where each edge corresponds to a connection between a pair of brain regions.

The following two propositions, analogous to the results presented in~\cite{brzyski2024spinner}, are important for the analysis and interpretation of models within this class.
\begin{proposition}
\label{SymmetricSolution}
Assume that all predictor matrices \( A_i \in \mathbb{R}^{p \times p} \) and the weight matrix \( W \in \mathbb{R}^{p \times p} \) are symmetric. Then, the set of solutions to the optimization problem defined in~\eqref{LogisticSpinner} contains at least one symmetric matrix \( B \in \mathbb{R}^{p \times p} \).
\end{proposition}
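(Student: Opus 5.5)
Assume the solution set of \eqref{LogisticSpinner} is nonempty, e.g.\ under the non-separability assumption of Proposition~\ref{PropSolExistence}. The plan is to symmetrize an arbitrary solution and use convexity of $F$.

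Take any minimizer $(\hat B,\hat\beta)$ and set $\tilde B := \tfrac12(\hat B + \hat B^\top)$. We will show that $(\tilde B,\hat\beta)$ is also a minimizer. Since $\tilde B$ is symmetric, this proves the claim.

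\textbf{Step 1: transposition preserves $F$.} We show $F(\hat B^\top,\hat\beta)=F(\hat B,\hat\beta)$.
\begin{itemize}
\item The log-likelihood depends on $B$ only through the linear predictors $\eta_i$. Symmetry of $A_i$ gives
\[
\langle A_i, B^\top\rangle=\mathrm{tr}(A_i^\top B^\top)=\mathrm{tr}(BA_i)=\mathrm{tr}(A_iB)=\mathrm{tr}(A_i^\top B)=\langle A_i,B\rangle .
\]
So every $\eta_i$ is unchanged under $B\mapsto B^\top$, and hence so is $\ell(B,\beta)$.
\item $B$ and $B^\top$ have the same singular values, so $\|B^\top\|_*=\|B\|_*$.
\item Symmetry of $W$ gives $W\circ B^\top=(W^\top\circ B)^\top=(W\circ B)^\top$. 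Transposition only permutes entries, so $\|W\circ B^\top\|_1=\|W\circ B\|_1$.
\end{itemize}
Consequently $(\hat B^\top,\hat\beta)$ is also a minimizer.

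\textbf{Step 2: convexity.} $F$ is jointly convex in $(B,\beta)$:
\begin{itemize}
\item $-\ell$ is a sum of the convex functions $\eta\mapsto \ln(1+e^\eta)-y\eta$ composed with the affine maps $\eta_i(B,\beta)$;
\item $\lambda_N\|B\|_*$ and $\lambda_L\|W\circ B\|_1$ are norms (or seminorms) composed with linear maps, with nonnegative coefficients.
\end{itemize}
Therefore
\[
F(\tilde B,\hat\beta)\le \tfrac12 F(\hat B,\hat\beta)+\tfrac12 F(\hat B^\top,\hat\beta)=\min F .
\]
Hence $(\tilde B,\hat\beta)$ attains the minimum.

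There is no real obstacle here. The only points needing care are that existence of some solution must be assumed (or taken from Proposition~\ref{PropSolExistence}), and the short trace identity in Step 1, which uses symmetry of $A_i$.
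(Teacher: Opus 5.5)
Your proof is correct and follows essentially the same route as the paper: symmetrize a minimizer as $\tilde B=\tfrac12(\hat B+\hat B^\top)$, use symmetry of the $A_i$ to keep the linear predictors (hence the likelihood) unchanged, and combine transpose-invariance of $\|\cdot\|_*$ and of $\|W\circ\cdot\|_1$ (for symmetric $W$) with convexity (the triangle inequality) to conclude the objective does not increase. The only cosmetic difference is that the paper notes $\eta_i(\tilde B,\hat\beta)=\eta_i(\hat B,\hat\beta)$ exactly instead of applying convexity to the loss term. Like you, it implicitly assumes a minimizer exists.
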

\begin{proof}
    The proof is provided in Appendix~\ref{subsec:Prop22}.
\end{proof}
\begin{proposition}
\label{PermutingSolution}
Let $\big(\widehat{B}, \widehat{\beta}\big)$ be a solution to the optimization problem~\eqref{LogisticSpinner}. Let \( \pi : \{1, \ldots, p\} \to \{1, \ldots, p\} \) be a permutation, and let \( P_\pi \) denote the corresponding permutation matrix, so that for any column vector \( v \), \( P_\pi v = [v_{\pi(1)}, \ldots, v_{\pi(p)}]^\top \). Define the permuted predictor and weight matrices as $A_i^\pi := P_\pi A_i P_\pi^\top$ and $W^\pi := P_\pi W P_\pi^\top$. Then, solving the optimization problem with \( \{A_i^\pi\}_{i=1}^n \) and \( W^\pi \) in place of \( \{A_i\}_{i=1}^n \) and \( W \) yields a solution \( \{\widehat{B}^\pi, \widehat{\beta}\} \), where $\widehat{B}^\pi := P_\pi \widehat{B} P_\pi^\top$. In other words, permuting the rows and columns of the input matrices results in a corresponding permutation of the solution matrix \( \widehat{B} \), while the covariate coefficient \( \widehat{\beta} \) remains unchanged.
\end{proposition}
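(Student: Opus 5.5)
The plan is to show that conjugation by $P_\pi$ gives a bijection $(B,\beta)\mapsto(P_\pi B P_\pi^\top,\beta)$ between the feasible sets of the two problems. Under this map the objective of the permuted problem, evaluated at the image point, equals the original objective. Minimizers then correspond to minimizers, and in particular $(\widehat{B}^\pi,\widehat{\beta})$ is a solution of the permuted problem. Write $F^\pi$ for the objective built from $\{A_i^\pi\}$ and $W^\pi$. The claim reduces to the identity $F^\pi(P_\pi B P_\pi^\top,\beta)=F(B,\beta)$ for all $(B,\beta)$. Given this identity, for any $(B',\beta')$ we set $B:=P_\pi^\top B' P_\pi$ and get
\[
F^\pi(B',\beta') = F(B,\beta') \geq F(\widehat{B},\widehat{\beta}) = F^\pi(\widehat{B}^\pi,\widehat{\beta}),
\]
which is exactly optimality of $(\widehat{B}^\pi,\widehat{\beta})$.

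To verify the identity, I would check the three terms of $F$ separately, using only that $P_\pi$ is orthogonal.

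First, the linear predictor is unchanged. By cyclicity of the trace and $P_\pi^\top P_\pi = I$,
\[
\langle A_i^\pi, P_\pi B P_\pi^\top\rangle = \mathrm{tr}(P_\pi A_i^\top P_\pi^\top P_\pi B P_\pi^\top) = \mathrm{tr}(A_i^\top B) = \langle A_i, B\rangle .
\]
Since $X$ is not permuted, every $\eta_i$ stays the same, and so does the log-likelihood $\ell$.

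Second, the nuclear norm is invariant under multiplication by orthogonal matrices on either side. If $B=U\Sigma V^\top$, then $P_\pi U\Sigma (P_\pi V)^\top$ is an SVD of $P_\pi B P_\pi^\top$ with the same singular values.

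Third, for the weighted $\ell_1$ term, I would use the entrywise description $(P_\pi M P_\pi^\top)(k,l) = M(\pi(k),\pi(l))$. This gives $W^\pi\circ (P_\pi B P_\pi^\top) = P_\pi (W\circ B) P_\pi^\top$. The entries of this matrix are just a rearrangement of the entries of $W\circ B$, so its entrywise $\ell_1$ norm is unchanged.

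There is no real obstacle here. The only point needing care is the entrywise formula for $P_\pi M P_\pi^\top$ under the stated convention $P_\pi v=[v_{\pi(1)},\dots,v_{\pi(p)}]^\top$. From that convention, $(P_\pi M)(k,\cdot)=M(\pi(k),\cdot)$, and right multiplication by $P_\pi^\top$ permutes the columns in the same way. This is what makes the Hadamard product commute with the conjugation. Since existence of $(\widehat{B},\widehat{\beta})$ is assumed, Proposition~\ref{PropSolExistence} is not needed.
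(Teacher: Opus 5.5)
Your proof is correct and follows essentially the same route as the paper: you establish the invariance identity $F^\pi(P_\pi B P_\pi^\top,\beta)=F(B,\beta)$ term by term (linear predictor, nuclear norm, weighted $\ell_1$ penalty), then transfer optimality through the bijection $B\mapsto P_\pi B P_\pi^\top$. The paper writes the same identity as $F(B,\beta\,|\,A_i^\pi,W^\pi)=F(B^{\pi^{-1}},\beta)$, and the remaining differences are cosmetic: it checks the inner product by reindexing the double sum where you use trace cyclicity, and it first equates the two optimal values where you use a direct inequality chain.
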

\begin{proof}
    The proof is provided in Appendix~\ref{subsec:Prop23}.
\end{proof}
\subsection{Selection of the Weight Matrix \( W \)}
The weight matrix \( W \in \mathbb{R}^{p \times p} \) controls the elementwise sparsity of the signal matrix \( B \) through the weighted \(\ell_1\)-penalty \( \|W \circ B\|_1 \). Its entries should reflect prior knowledge about the relevance of individual edges to the response. Specifically, relatively large values of \( W_{jk} \) should be assigned to edges \( (j,k) \) believed to have little or no impact on the outcome, thereby encouraging their shrinkage to zero. Conversely, smaller values of \( W_{jk} \) should be used for edges expected to be important, reducing penalization and allowing the model to retain their contribution. For edges that are structurally absent from the data—i.e., \( A_i(j,k) = 0 \) for all \( i \)—we set \( W_{jk} := 0 \) by default. This avoids unnecessary penalization of entries that cannot be informed by the data, which may help preserve the low-rank structure of \( B \). For all other edges, the default choice is \( W_{jk} := 1 \). As a result, in the context of brain imaging data, where structural or functional connectivity matrices typically have zeros on the diagonal, the default choice is to set \( W \) with ones on the off-diagonal entries and zeros on the diagonal.

\section{Optimization Algorithm}
We solve the minimization problem~\eqref{LogisticSpinner} by decomposing its objective function \( F(B, \beta) \) into three components:
\begin{equation}
F(B, \beta) = f(B, \beta) + g(B) + h(B),
\end{equation}
where \( f(B, \boldsymbol{\beta}) := -\ell(B, \boldsymbol{\beta}) \), \( g(B) := \lambda_N \|B\|_* \), and \( h(B) := \lambda_L \|\mathrm{vec}(W \circ B)\|_1 \).

Following the standard ADMM framework~\cite{boyd2011distributed}, we reformulate the problem by introducing auxiliary variables \( C \) and \( D \), and enforcing the constraints \( B = D \), \( C = D \). The next step in the ADMM procedure is to construct the augmented Lagrangian, which incorporates these constraints via quadratic penalty terms. This leads to a decomposable objective function that can be minimized by alternating updates over the primal variables  $(B, \beta)$, $C$, and $D$, followed by updates of the associated dual variables.

We initialize \( D^{[0]} \), \( U_1^{[0]} \), and \( U_2^{[0]} \) as zero matrices. Then, at each iteration \( k+1 \), we update the variables \( B^{[k+1]} \), \( C^{[k+1]} \), \( D^{[k+1]} \) by solving the following three subproblems:
\begin{align}
\big(B^{[k+1]}&,\beta^{[k+1]}\big):= \notag\\
&\argmin{B, \beta} \left\{ 2f(B,\beta) + \delta_1^{[k]} \|D^{[k]} + \tfrac{U_1^{[k]}}{\delta_1^{[k]}} - B\|_F^2 \right\} , \tag{SP 1} \label{eq:update_B_beta} \\
C^{[k+1]}& := \argmin{C} 
\left\{ 2g(C) + \delta_2^{[k]} \|D^{[k]} + \tfrac{U_2^{[k]}}{\delta_2^{[k]}} - C\|_F^2 \right\}, \tag{SP 2} \label{eq:update_C} \\
D^{[k+1]} &:= \argmin{D} \left\{ 2h(D) + \delta_1^{[k]} \|D + \tfrac{U_1^{[k]}}{\delta_1^{[k]}} - B^{[k+1]}\|_F^2 \right. \notag \\
&\quad \left\{ + \tfrac{\delta_2^{[k]}}{2} \|D + \tfrac{U_2^{[k]}}{\delta_2^{[k]}} - C^{[k+1]}\|_F^2 \right\}, \tag{SP 3} \label{eq:update_D}
\end{align}
The multiplier matrices, \( U_1^{[k+1]} \), and \( U_2^{[k+1]} \),  are updated as 
\begin{equation}
\label{eq:update_U}
\begin{aligned}
U_1^{[k+1]} := U_1^{[k]} + \delta_1^{[k]} (D^{[k+1]} - B^{[k+1]}), \\
U_2^{[k+1]} := U_2^{[k]} + \delta_2^{[k]} (D^{[k+1]} - C^{[k+1]}).
\end{aligned}
\end{equation}

This formulation separates the smooth (log-likelihood) and non-smooth (nuclear and \(\ell_1\)) components, enabling efficient updates via proximal operators. The use of auxiliary variables \( C \) and \( D \) allows decoupling of the low-rank and sparsity-inducing penalties, while the dual variables \( U_1 \) and \( U_2 \) enforce consensus among the variables.

Subproblems~\eqref{eq:update_C} and~\eqref{eq:update_D} admit closed-form solutions, as introduced in~\cite{brzyski2024spinner}. In contrast, subproblem~\eqref{eq:update_B_beta} lacks an analytical solution and requires an iterative numerical solver. Designing an efficient algorithm for solving~\eqref{eq:update_B_beta} represents the primary computational challenge in the implementation. A detailed description of the employed solution strategy is provided in Section~\ref{subsect:SolvingSP1}.

The iterative procedure—comprising the updates of \( B \), \( C \), and \( D \) via subproblems~\eqref{eq:update_B_beta}–\eqref{eq:update_D}, followed by the multiplier updates in~\eqref{eq:update_U}—is repeated until convergence. The stopping criterion is based on the relative Frobenius norms of the primal and dual residuals. Specifically, the algorithm terminates when the following conditions are satisfied: \( \|D^{[k+1]} - B^{[k+1]}\|_F / \|B^{[k+1]}\|_F < \varepsilon_{\mathrm{pri}} \), \( \|D^{[k+1]} - C^{[k+1]}\|_F / \|B^{[k+1]}\|_F < \varepsilon_{\mathrm{pri}} \), and \( \|D^{[k+1]} - D^{[k]}\|_F / \|D^{[k]}\|_F < \varepsilon_{\mathrm{dual}} \), where \( \varepsilon_{\mathrm{pri}} \) and \( \varepsilon_{\mathrm{dual}} \) are predefined tolerance levels. 

To balance the convergence rates of all terms involved in the stopping criterion, the penalty parameters \( \delta_1^{[k]} \) and \( \delta_2^{[k]} \) are adaptively updated every given number of iterations (10 by default). Specifically, the ratio \( \delta_1^{[k]} / \delta_2^{[k]} \) is adjusted based on a comparison between \( \|D^{[k+1]} - B^{[k+1]}\|_F \) and \( \|D^{[k+1]} - C^{[k+1]}\|_F \), while the overall scaling factor is modified according to the relative magnitudes of the primal and dual residuals. These adaptive updates help maintain balanced and stable convergence behavior.

\subsection{Optimization Strategy for \ref{eq:update_B_beta}}
\label{subsect:SolvingSP1}
After vectorizing the matrices \( A_i \) into row vectors \( {\mathcal{A}_i := \mathrm{vec}(A_i)^\top} \), we stack them to form the matrix \( \mathcal{A} \in \mathbb{R}^{n \times p^2} \) and define $Z \in \mathbb{R}^{n \times (p^2 + d)}$ as \( Z := [\mathcal{A}\, X]  \). Now, with  $\theta := [\mathrm{vec}(B)^\top \, \beta^\top]^\top \in \mathbb{R}^{p^2 + d}$, \eqref{eq:update_B_beta} can be rewritten as a minimization problem over \( \theta \) with the objective function
\begin{equation}
\mathcal{L}(\theta) := \sum_{i=1}^n \left[ \ln\left(1 + e^{\eta_i}\right) - y_i \eta_i \right] + \frac{\delta}{2} \left\| P (\theta - \gamma ) \right\|_2^2,
\end{equation}
where  
\( \eta_i = Z_i \theta \),\, \( \delta:= \delta_1^{[k]} \),\, \( P:= \mathrm{diag}(\mathrm{I}_{p^2},\, 0_{d}) \),\, and\,  
\( \gamma:= [\mathrm{vec}(Y)^\top ,\, 0_d^\top]^\top \),\, with \( Y:= D^{[k]} + \tfrac{U_1^{[k]}}{\delta_1^{[k]}} \).

This formulation corresponds to a regularized logistic regression problem that can be solved using the \emph{Iteratively Reweighted Least Squares (IRLS)} method. In this context, IRLS is implemented via the Newton–Raphson algorithm, which iteratively updates \( \theta \) using the gradient \( \nabla \mathcal{L}(\theta) \) and the Hessian \( H(\theta) \) of the objective function. Define \( \mu \in \mathbb{R}^n \) by setting \( \mu_i := 1/(1 + e^{-\eta_i}) \) for \( i = 1, \dots, n \), and define the diagonal matrix \( \mathcal{D} \in \mathbb{R}^{n \times n} \) with diagonal entries \( \mathcal{D}_{ii} := \mu_i (1 - \mu_i) \). Then, the gradient of \( \mathcal{L}(\theta) \) is given by
\begin{equation}
\nabla \mathcal{L}(\theta) = Z^\top (\mu - y) + \delta P (\theta - \gamma),
\end{equation}
while the Hessian becomes
\begin{equation}
\label{eq:Hessian}
H(\theta) = Z^\top \mathcal{D} Z + \delta P.
\end{equation}
The Newton–Raphson update rule is given by
\begin{equation}
\label{NRrule}
\theta^{(t+1)} = \theta^{(t)} - H\big(\theta^{(t)}\big)^{-1} \nabla \mathcal{L}\big(\theta^{(t)}\big).
\end{equation}

Rewriting the gradient at \( \theta^{(t)} \) we get
\begin{equation}
\label{eq:gradient2}
  \begin{aligned}
    \nabla \mathcal{L}&(\theta^{(t)}) =\\ 
      & Z^\top (\mu^{(t)} - y) + \delta P (\theta^{(t)} - \gamma ) =\\
			& \left(Z^\top \mathcal{D}^{(t)} Z + \delta P \right) \theta^{(t)} + Z^\top \mathcal{D}^{(t)} z^{(t)} - \delta P \gamma,
  \end{aligned}
\end{equation}
for \( z^{(t)} := \big(\mathcal{D}^{(t)}\big)^{-1}(\mu^{(t)} - y) - \eta^{(t)} \).

Substituting \eqref{eq:Hessian} and \eqref{eq:gradient2} to~\eqref{NRrule} we get
\begin{equation}
\label{eq:theta_tPlus1}
\theta^{(t+1)} = \left(Z^\top \mathcal{D}^{(t)} Z + \delta P \right)^{-1} \big(Z^\top \mathcal{D}^{(t)} z^{(t)} + \delta P \gamma  \big)
\end{equation}

After multiplying both sides of \eqref{eq:theta_tPlus1} by $(Z^\top \mathcal{D}^{(t)} Z + \delta P)$ and solving the corresponding equation for $\mathrm{vec}\big(B^{(t+1)}\big)$ and $\beta^{(t+1)}$ we finally get
\begin{equation}
\label{UpdateFormulas}
\left\{
\begin{array}{l}
\begin{aligned}
\mathrm{vec}\big(B&^{(t+1)}\big)= \\
&\big(\mathcal{A}^\top \widetilde{\mathcal{D}} \mathcal{A} + \delta I_{p^2}\big)^{-1}\big(\mathcal{A}^\top \widetilde{\mathcal{D}}  z + \delta\, \mathrm{vec}(Y)\big)\\
\beta^{(t+1)} &= \big(X^\top \mathcal{D}X\big)^{-1} X^\top \mathcal{D} \Big(z - \mathcal{A} \mathrm{vec}\big(B^{(t+1)}\big)\Big),
\end{aligned}
\end{array}
\right.
\end{equation}
for $\widetilde{\mathcal{D}}:= \mathcal{D}-\mathcal{D} X\Big(X^\top \mathcal{D} X\Big)^{-1}X^\top \mathcal{D}$. In the above update formulas, only the quantities $\mathcal{D}$, $\widetilde{\mathcal{D}}$, and $z$ depend on the iteration $t$ (for simplicity, we omit the superscript notation).

\subsection{Efficient Implementation via Reduced SVD}
While the analytically convenient update formula \eqref{UpdateFormulas} provides a direct solution for $(B,\beta)$ at each iteration, it requires inverting a $p^2 \times p^2$ matrix. In many applications, $p^2 \gg n$, making the direct computation of $B^{(t+1)}$ computationally expensive. To overcome this issue in the case $p^2 > n$, we reformulate the update step in a way that involves only an $n \times n$ matrix inversion, leading to a significant speed-up. The key idea is to exploit the \emph{reduced singular value decomposition (SVD)} of $\mathcal{A}$:
\begin{equation}
\label{reducedSVD}
\mathcal{A} = U [\, S \ \ \bm{0}_{n\times (p^2-n)}\, ]\ [ V_1 \ \  V_2 ]^\top,
\end{equation}
where \( S \in \mathbb{R}^{n \times n} \) is diagonal matrix, $V_1 \in \mathbb{R}^{p^2 \times n}$ and $V_2\in \mathbb{R}^{p^2 \times (p^2-n)}$ are submatrices of orthonormal matrix \( V :=[ V_1\ V_2 ]\). 

This decomposition needs to be computed once for the entire ADMM algorithm and, importantly, we only need to retain the $n$ right-singular vectors corresponding to the nonzero singular values.

Let $\tilde{S}: = [\, S \ \ \bm{0}_{n\times (p^2-n)}\, ]$. Thanks to \eqref{reducedSVD}, $B$ update in \eqref{UpdateFormulas} can be rewritten as
\begin{equation}
\mathrm{vec}\big(B^{(t+1)}\big) = V \mathcal{M}\, \omega,
\end{equation}
where
\begin{equation}
\begin{aligned}
\mathcal{M}: =\ \Big( \tilde{S}^\top & U^\top \widetilde{\mathcal{D}}  U \tilde{S}  + \delta I_{p^2} \Big)^{-1} = \\
&\begin{bmatrix}
(S U^\top \widetilde{\mathcal{D}} U S + \delta I_n)^{-1} & 0 \\
0 & \delta^{-1} I_{p^2-n}
\end{bmatrix}
\end{aligned}
\end{equation}
and
\begin{equation}
\begin{aligned}
\omega: =\ \tilde{S}^\top  U^\top& \widetilde{\mathcal{D}}\, z  \, +\,  \delta\, V^\top \mathrm{vec}(Y) = \\
&\begin{bmatrix}
\tilde{S}^\top U^\top \widetilde{\mathcal{D}}\, z\, +\, \delta\, V_1^\top \mathrm{vec}(Y) \\
\delta\, V_2^\top \mathrm{vec}(Y)
\end{bmatrix}.
\end{aligned}
\end{equation}
Substituting the above, we obtain
\begin{equation}
\begin{aligned}
\mathrm{vec}\big(B^{(t+1)}\big) & = \ \ V_1 \widetilde{z} + V_2 V_2^\top \mathrm{vec}(Y)\ =\\
&V_1 \widetilde{z} - V_1 \nu_1 + \mathrm{vec}(Y),
\end{aligned}
\end{equation}
where $\nu_1 := V_1^\top \mathrm{vec}(Y)$ and
\begin{equation*}
\widetilde{z}: = \big(S U^\top \widetilde{\mathcal{D}} U S + \delta I_n\big)^{-1}\big(S U^\top \widetilde{\mathcal{D}} z + \delta \nu_1\big),
\end{equation*}
since $V_1 V_1^\top + V_2 V_2^\top = I$ by the orthonormality of $V$.
We also have
\begin{equation}
\mathcal{A}\mathrm{vec}\big(B^{(t+1)}\big)  = US\widetilde{z}
\end{equation}
and the entire procedure is summarized in \cref{alg:svd_irls}.
\begin{algorithm}[tb]
\caption{SVD-Based algorithm for IRLS step}
\label{alg:svd_irls}
\begin{algorithmic}[1]
\STATE \textbf{Input:} Singular value matrix \( S \in \mathbb{R}^{n \times n} \) and left singular vectors \( U \in \mathbb{R}^{n \times n} \) of \(\mathcal{A}\), covariate matrix \( X \in \mathbb{R}^{n \times d} \), binary response vector \( y \in \{0,1\}^n \), step-size parameter \( \delta \), matrix \( Y \in \mathbb{R}^{p \times p} \), auxiliary vectors \( \nu_1 := V_1^\top \mathrm{vec}(Y) \) and \( \nu_2 := \mathrm{vec}(Y) - V_1 \nu_1 \)
\STATE \textbf{Initialize:} \( \mathrm{vec}(B^{(0)}), \beta^{(0)}, \widetilde{z} \leftarrow 0 \)
\FOR{each iteration \( t = 0, 1, 2, \dots \) until convergence}
    \STATE \( \eta \leftarrow US\widetilde{z} + X \boldsymbol{\beta}^{(t)} \)
    \STATE Define \( \mu \in \mathbb{R}^n \) by \( \mu_i \leftarrow \frac{1}{1 + e^{-\eta_i}} \) for \( i = 1, \dots, n \)
    \STATE Define diagonal matrix \( \mathcal{D} \in \mathbb{R}^{n \times n} \) with \( \mathcal{D}_{ii} \leftarrow \mu_i (1 - \mu_i) \)
    \STATE \( z \leftarrow \eta - \mathcal{D}^{-1}(\mu - y) \)
    \STATE \( \widetilde{\mathcal{D}} \leftarrow \mathcal{D} - \mathcal{D} X (X^\top \mathcal{D} X)^{-1} X^\top \mathcal{D} \)
    \STATE \( \widetilde{z} \leftarrow (S U^\top \widetilde{\mathcal{D}} U S + \delta I_n)^{-1} (S U^\top \widetilde{\mathcal{D}} z + \delta \nu_1) \)
    \STATE \( \mathrm{vec}(B^{(t+1)}) \leftarrow V_1 \widetilde{z} + \nu_2 \)
    \STATE \( \beta^{(t+1)} \leftarrow (X^\top \mathcal{D} X)^{-1} X^\top \mathcal{D} (z - US\widetilde{z}) \)
\ENDFOR
\STATE \textbf{Output:} \( \mathrm{vec}(B^{(t+1)}), \boldsymbol{\beta}^{(t+1)} \)
\end{algorithmic}
\end{algorithm}
The matrix inversions required in \cref{alg:svd_irls} appear in lines 7, 8, 9, and 11. Among these, the inversions in lines 8 and 11 involve only \( d \times d \) matrices and are therefore computationally inexpensive. Similarly, the inversion in line 7 is trivial, as it concerns a diagonal matrix. The primary computational bottleneck arises in line 9. However, this only requires inverting an \( n \times n \) matrix and it is significantly more efficient than in the update formulation \eqref{UpdateFormulas}. Moreover, the current estimates $\big(B^{[k]}, \beta^{[k]}\big)$ obtained from the main ADMM loop (subproblems~\eqref{eq:update_B_beta}--\eqref{eq:update_D} and the dual update step~\eqref{eq:update_U}) can be effectively used as initialization for \( \mathrm{vec}(B^{(0)}) \) and \( \beta^{(0)} \). This warm-start strategy may accelerate convergence and improve numerical stability.

\section{Experimental Results}
\subsection{Setup}
We conducted a simulation study to evaluate the performance of the Logistic SpINNEr and compare it to three alternative methods: \textit{SpINNEr}, introduced in~\cite{brzyski2024spinner}; \textit{Logistic LASSO}, a single-penalty regularization approach with objective function $-\ell(B, \beta) + \lambda \|W \circ B\|_1$; and \textit{Logistic NUCLEAR}, which applies a nuclear norm penalty via the objective function $-\ell(B, \beta) + \lambda \|B\|_*$, where $\ell(B, \beta)$ is defined in~\cref{logLikelihood}.

Two simulation scenarios were considered to assess the impact of sample size and signal strength on the estimation performance. In the first scenario, the number of observations $n$ varied over a grid $n \in \{100, 120, \dots, 240\}$, while the true signal remained fixed. In the second scenario, the sample size was fixed at $n = 300$, and the true signal amplitude was scaled by a factor $\gamma \in \{0.1, 0.25, 0.5, 1, 1.5, 2\}$.

For each grid value (either sample size or signal strength), we generated $50$ independent datasets. Each dataset consisted of $n$ observations with symmetric matrix-valued predictors $\mathbf{A}_i \in \mathbb{R}^{60 \times 60}$, whose entries were sampled from a standard normal distribution and symmetrized. A constant intercept $x_i = 1$ was included in all cases.

The true coefficient matrix $\mathbf{B}_{\text{tr}}$ contained three nonzero blocks—two with positive entries and one with negative entries—resulting in a total of $66$ nonzero elements (see~\cref{fig:true-signal}). The intercept coefficient was set to $\mathbf{\beta}_{\text{tr}} = 1$. Responses were generated according to the logistic model ${y_i \sim \mathrm{Bernoulli}\big( 1/(1 + e^{-\eta_i}) \big)}$, for $\eta_i = \langle A_i, \mathbf{B}_{\text{tr}} \rangle + \mathbf{\beta}_{\text{tr}}$.

The regularization parameters for all methods were selected via 5-fold cross-validation: over $10$ values for single-penalty methods, and over a $10 \times 10$ grid of $(\lambda_L, \lambda_N)$ values for Logistic SpINNEr and SpINNEr. Estimation accuracy was assessed using the \textit{Relative Frobenius Error}, defined as $\mathbb{E}\left[\|\hat{\mathbf{B}} - \mathbf{B}_{\text{tr}}\|_F/\|\mathbf{B}_{\text{tr}}\|_F\right]$. For each configuration, the reported value corresponds to the empirical mean over $50$ Monte‑Carlo repetitions.
\subsection{Results}
\begin{figure*}[ht]
  \centering
  \begin{subfigure}[t]{0.405\columnwidth}
    \centering
    \includegraphics[width=\linewidth]{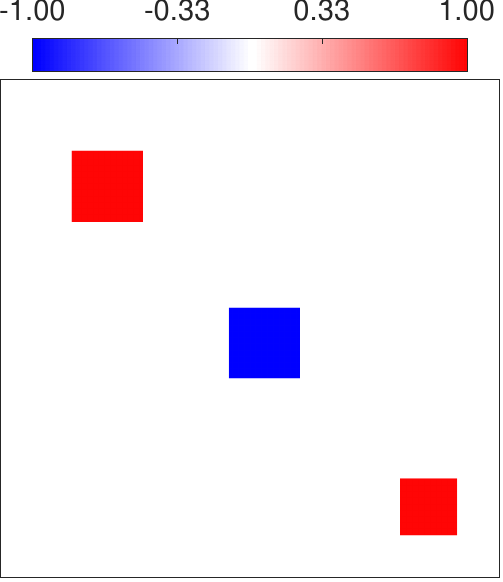}
    \caption{True signal}
    \label{fig:true-signal}
  \end{subfigure}
  \begin{subfigure}[t]{0.405\columnwidth}
    \centering
    \includegraphics[width=\linewidth]{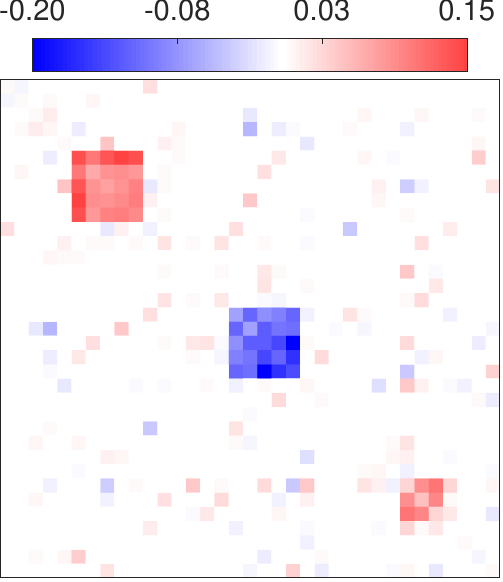}
    \caption{Logistic SpINNEr}
    \label{fig:heatmap-logsp}
  \end{subfigure}
  \begin{subfigure}[t]{0.405\columnwidth}
    \centering
    \includegraphics[width=\linewidth]{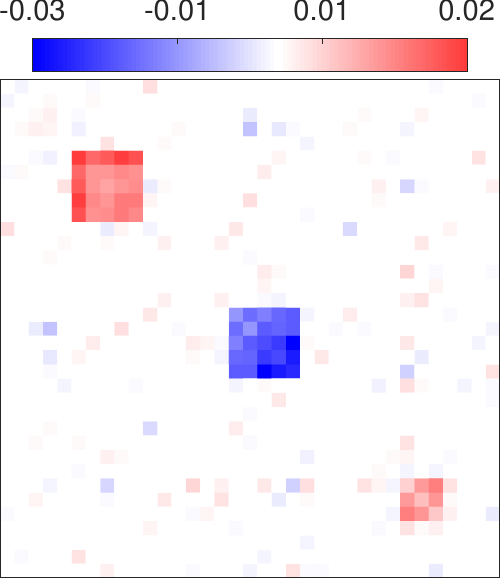}
    \caption{SpINNEr}
    \label{fig:heatmap-sp}
  \end{subfigure}
  \begin{subfigure}[t]{0.405\columnwidth}
    \centering
    \includegraphics[width=\linewidth]{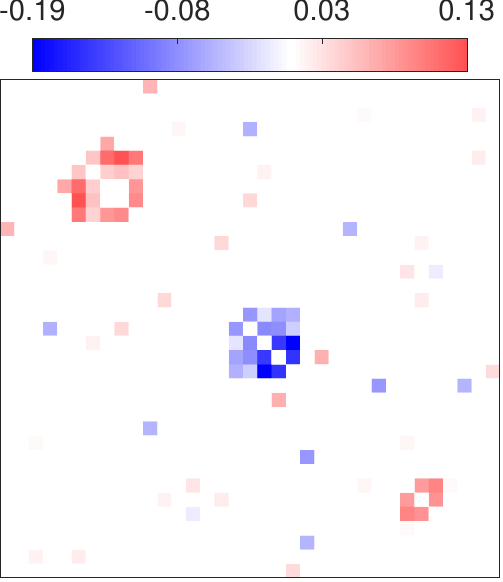}
    \caption{Logistic LASSO}
    \label{fig:heatmap-logla}
  \end{subfigure}
  \begin{subfigure}[t]{0.405\columnwidth}
    \centering
    \includegraphics[width=\linewidth]{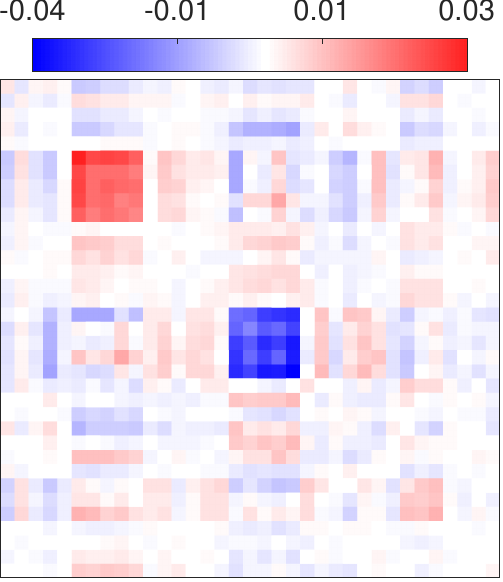}
    \caption{Logistic Nuclear}
    \label{fig:heatmap-lognu}
  \end{subfigure}
  \begin{subfigure}[t]{0.99\columnwidth}
    \centering
    \includegraphics[width=\linewidth]{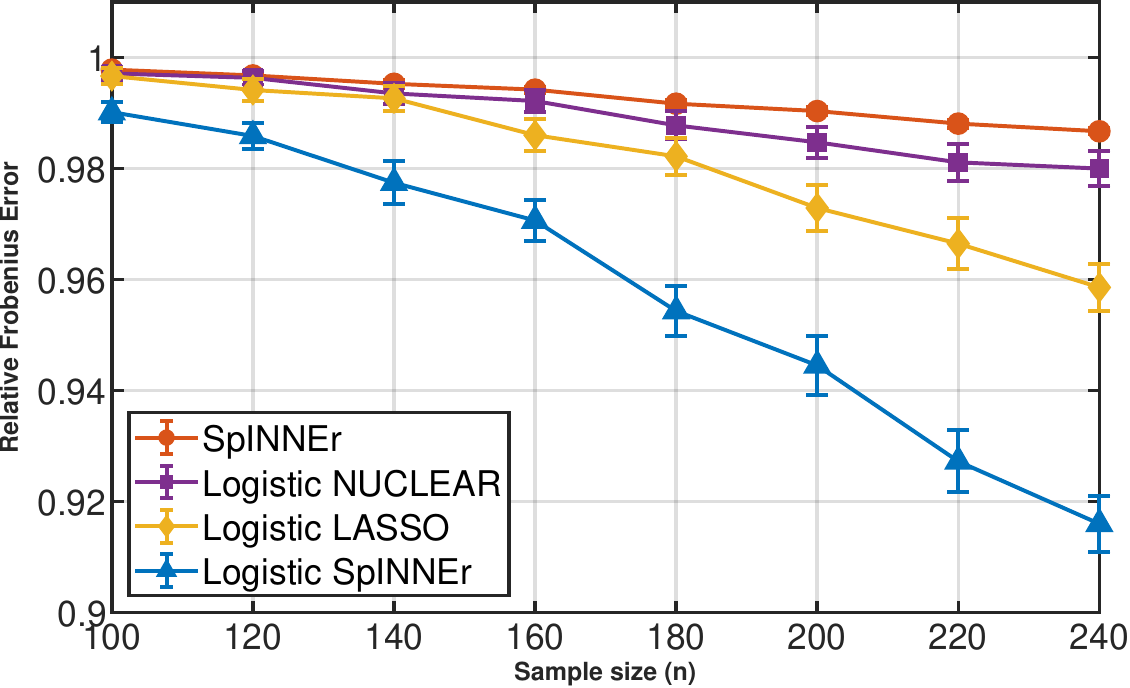}
    \caption{Estimation error for increasing sample size}
    \label{fig:perf-samplesize}
  \end{subfigure}
  \quad
  \begin{subfigure}[t]{0.99\columnwidth}
    \centering
    \includegraphics[width=\linewidth]{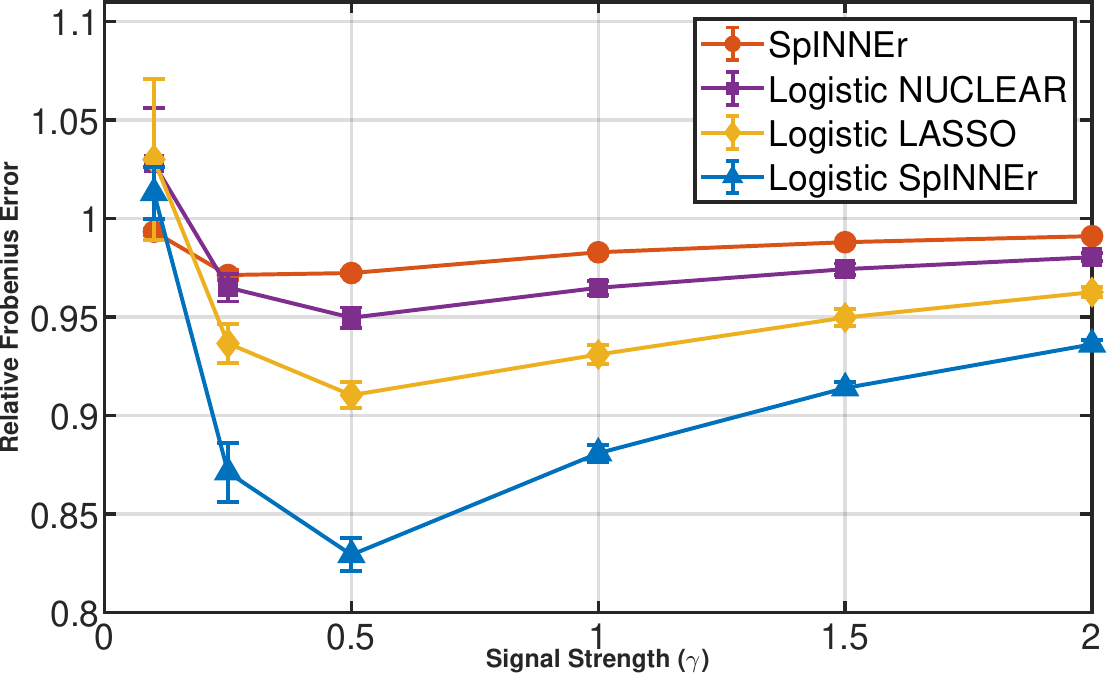}
    \caption{Estimation error for increasing signal strength}
    \label{fig:perf-signal}
  \end{subfigure}
  \caption{Panel~(a) displays the heatmap of the true coefficient matrix, while panels~(b)--(e) show the estimated coefficient matrices obtained from the first simulation replicate for each of the four methods (for $n = 300$ and $\gamma = 0.5$). For clarity, only the upper-left $35 \times 35$ submatrix is shown, corresponding to the region containing all nonzero blocks. Among the single-penalty methods, the $\ell_1$ norm, (d), yields better inter-block sparsity, whereas the nuclear norm, (e), achieves superior estimation within the true blocks but allows more false positives between blocks. The joint use of both penalties, (c) and~(b), results in a cumulative effect, balancing intra- and inter-block accuracy. Panels~(f) and~(g) display the average estimation error, defined as $\|\hat{\mathbf{B}} - \mathbf{B}_{\text{tr}}\|_F/\|\mathbf{B}_{\text{tr}}\|_F$, computed across $50$ repetitions. Panel~(f) illustrates how the error drops for all four methods as the sample size $n$ increases. Panel~(g) shows the effect of signal strength: the error initially decreases as the true effect size grows, reaches a minimum at moderate signal levels, and then increases again for very strong signals. Error bars represent $\pm 2$ standard errors.}
   \label{fig:heatmaps}
\end{figure*}

Figure~\ref{fig:heatmaps} shows heatmaps of the estimated coefficient matrices from all four methods alongside the true signal matrix (only the upper-left $35 \times 35$ submatrix containing the three nonzero blocks is displayed). Both Logistic SpINNEr and SpINNEr successfully recover the block-sparse structure of the true signal, exhibiting similar patterns of nonzero edges (see subfigures~(b) and~(c)). However, Logistic SpINNEr has substantially less shrinkage of the estimated coefficients toward zero, resulting in more accurate recovery of the true signal magnitudes. The single-penalty methods behave as expected. Namely, Logistic LASSO produces a sparser estimate, often missing several significant edges within the true blocks. In contrast, Logistic NUCLEAR achieves strong intra-block estimation accuracy but tends to introduce numerous spurious inter-block non-zero coefficients, reflecting its inability to enforce entry-wise sparsity effectively.

Figure~\ref{fig:perf-samplesize} shows the average relative Frobenius error as a function of the sample size. While all methods exhibit improved accuracy with increasing $n$, Logistic SpINNEr consistently achieves the lowest error across the entire sample size range. In contrast, SpINNEr yields the highest estimation error, indicating that an inappropriate model likelihood—even when paired with suitable regularization—can lead to substantial overshrinkage and degraded estimation performance.

Figure~\ref{fig:perf-signal} illustrates the behavior of the methods with the increasing signal strength. The observed, non-monotonic shape of the relative estimation error in regularized regression is a known phenomenon: when the signal is weak, regularization tends to suppress it as noise; as signal strength increases, estimation accuracy improves sharply; and eventually, the error may stabilize or slightly increase due to persistent regularization bias \cite{van2008high, zhang2014confidence, vazquez2025debiased, sur2019modern}.

In summary, Logistic SpINNEr outperforms the other methods across both simulation scenarios. Its ability to enforce sparsity and low-rank structure within a logistic regression framework enables it to recover structured signals more accurately, particularly in moderate signal-strength settings.

\section{Real Data Analysis}
We applied the proposed logistic SpINNEr regression framework to task-based fMRI data collected as part of a study on genetic risk for alcohol use disorder (AUD). Prior work already demonstrated altered structural and functional connectivity in individuals with familial AUD risk \cite{wetherill2012, weiland2013, herting2011}.

fMRI data were acquired while participants received brief spritzes of a high-sucrose solution, a paradigm designed to engage reward-related neural circuitry. The sample consisted of $n=161$ young adults exhibiting either social drinking patterns or increased alcohol use \cite{alessi2024high}.

Preprocessed fMRI time series were parcellated into 200 cortical regions of interest (ROIs) using the Schaefer atlas \cite{schaefer2018local} , with network assignments based on the Yeo 7-network organization \cite{yeo2011organization}. For each subject, a 200 × 200 functional connectivity matrix was constructed using pairwise correlations between the ROI-specific time series. These subject-specific connectivity matrices served as the sole predictors in a binary regression model, with family history of alcoholism (family-history-positive vs. family-history-negative) as the binary response.

To assess stability and mitigate sensitivity to sampling variability, we employed a bootstrap aggregation procedure \cite{efron1994introduction}. Specifically, 300 bootstrap samples were drawn at the subject level, and the penalized model was refit to each sample, yielding 300 estimated coefficient matrices. These estimates were aggregated by retaining only those matrix entries that were estimated to be nonzero in at least 90\% of the bootstrap fits. For the retained entries, the final coefficient values were taken as the median across bootstrap replications.
The aggregated coefficient matrix obtained from the bootstrap procedure contained 41 nonzero entries, corresponding to functional connections associated with family history of alcoholism. Both positive and negative associations were identified, reflecting increased or decreased connectivity with respect to genetic risk. Given the high dimensionality of the predictor space, this small number of retained edges highlights the strong regularization and stability properties of the proposed scalar-on-matrix penalized regression approach.
Figure~\ref{fig:glass-brain} provides an anatomical visualization of the 41 selected connections using a glass-brain representation generated with the BrainNet Viewer software \cite{xia2013brainnet}. The figure consists of four panels arranged in a single row showing the right and left lateral as well as axial and coronal viewpoints. This visualization highlights the spatial distribution of selected edges across cortical surfaces, enabling assessment of laterality and gross anatomical organization without over-interpretation.
\begin{figure*}[!tb]
  \centering
  \includegraphics[
    width=\textwidth,
    keepaspectratio
  ]{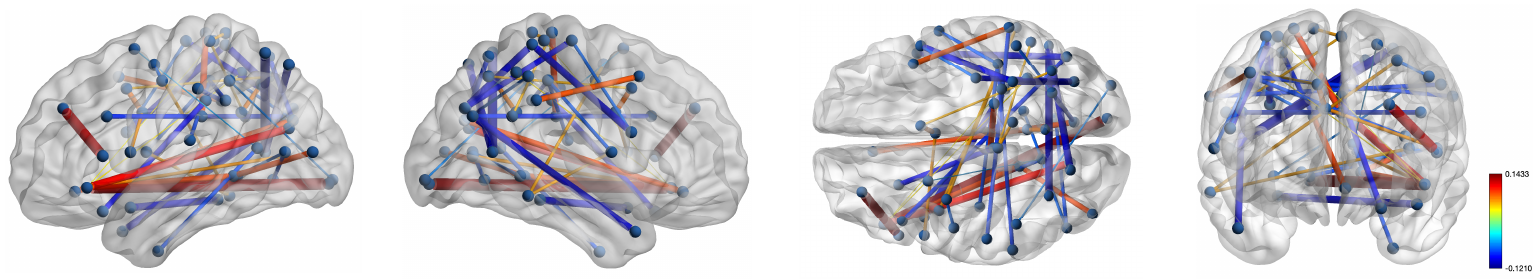}
  \caption{Glass brain network visualization. From left to right: right lateral, left lateral, axial (superior), and coronal (posterior) viewpoints.}
  \label{fig:glass-brain}
\end{figure*}
Figure~\ref{fig:circ-network} displays the estimated coefficient matrix as a circular network visualization, in which the 200 cortical regions are grouped according to the Yeo 7-network organization. Regions belonging to the same functional network are arranged contiguously along seven concentric arcs, which are themselves positioned in a circular layout. Edges are color-coded to reflect both the sign of the association (calculated via the average value across the bootstraps for a given edge) and by whether the connection occurs within or between networks. This representation emphasizes the network-level organization of the selected connections and illustrates that the identified effects are distributed across multiple large-scale functional systems rather than being confined to a single network or an isolated region. Together with the glass-brain representation, the circular network visualization demonstrates that the proposed method yields sparse yet anatomically coherent results, while preserving a direct correspondence between estimated coefficients and interpretable functional connections.
\begin{figure}[ht]
  \begin{center}
    \centerline{\includegraphics[width=1\columnwidth]{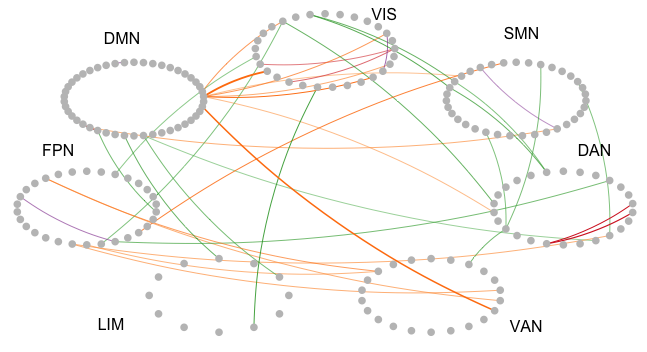}}
    \caption{
      Network graph showing edges present in 90 percent of the bootstrap samples. The labeled networks are visual (VIS), somato-motor (SMN), dorsal attention (DAN), ventral attention (VAN), limbic (LIM), fronto-parietal (FPN) and Default Mode Network (DMN). Edges colors represent mean positive (P) vs. negative (N) association across the bootstraps and within (W) vs. between (B) Yeo subnetworks: purple for P/W network, orange for P/B, red for N/W, and green for N/B.
    }
    \label{fig:circ-network}
  \end{center}
\end{figure}
\section{Discussion}
Our results demonstrate that employing a logistic loss function tailored to binary response data can significantly enhance estimation accuracy. Furthermore, we confirm that imposing joint nuclear and $\ell_1$ penalties in scalar-on-matrix regression offers clear advantages over single-structure regularization, when the underlying signal is both sparse and low-rank. In both simulation scenarios, Logistic SpINNEr consistently outperformed alternative methods, achieving lower estimation error.

To solve the optimization problem for Logistic SpINNEr, we develop an efficient algorithm based on the Alternating Direction Method of Multipliers (ADMM). Moreover, by incorporating a singular value decomposition within the IRLS step used to solve one of the ADMM subproblems, we replace a costly operation on $p^2 \times p^2$ matrices with an equivalent operation on $n \times n$ matrices, yielding a substantial reduction in overall computation time.

In the neuroimaging application, our findings underscore the practical impact of the proposed method for discovering neurobiologically meaningful patterns. When applied to the functional connectivity AUD dataset, logistic SpINNEr identified a sparse set of 41 connections (0.21\% out of 19,800 possible) that distinguished individuals with a family history of AUD. Notably, these selected connectivity edges were distributed across multiple large-scale functional systems rather than confined to a single network, indicating that the familial risk signature involves a broad configuration of connectivity changes. Both within-network and between-network connections emerged, suggesting that our joint regularization approach can uncover complex, distributed biomarkers that simpler models might overlook. 

Despite its strengths, logistic SpINNEr has also certain limitations. First, the added flexibility of dual penalties comes with increased computational complexity and the need to tune two regularization parameters $\lambda_N$ and $\lambda_L$. While our ADMM-based algorithm (with dimensionality reduction techniques and warm starts) mitigates the computational cost, training is still more intensive than using a single-penalty model. In large-scale settings (e.g., extremely high-resolution images or networks), further efficiency improvements or distributed computing might be required. Second, the model assumes that the relevant signal can be well-approximated by a combination of sparse and low-rank structure, but if the true underlying pattern does not conform to these structures, estimation performance may degrade or the interpretation may become less clear.

An important avenue for future research involves the development of more efficient and theoretically grounded strategies for selecting the regularization parameters $\lambda_N$ and $\lambda_L$. While our current implementation relies on grid-based cross-validation, this approach can be computationally intensive, particularly in high-dimensional settings. Adaptive or data-driven tuning procedures, such as empirical Bayes methods or information criteria tailored to structured penalties, may offer more scalable and statistically robust alternatives.



\section*{Impact Statement}
Our work introduced a method for analyzing high-dimensional structured data. By improving detection of meaningful patterns in neuroimaging data, it can advance research on complex biological systems and other areas requiring structured classification. Potential societal benefits include deeper insights into brain connectivity mechanisms, while risks concern inappropriate use of data-driven techniques in healthcare and related areas. We encourage responsible application and transparency when applying these methods.

\bibliography{references}
\bibliographystyle{icml2026}

\newpage
\appendix
\onecolumn
\section{Proofs of Logistic SpINNEr Properties}
\subsection{Proof of Proposition~\ref{PropSolExistence}}
\label{subsec:Prop21}
We will show that all the directions of recession of $F$ are directions of constancy, which implies that $F$ attains its minimum value \cite{rockafellar1970convex, bertsekas2009convex}. Let $r: = (B, \beta) \in \mathbb{R}^{p\times p} \times \mathbb{R}^d$ be any vector such that $r\neq 0$. We begin with the case where $\textrm{supp}(B) \subset \mathcal{I}$. Without loss of generality, we exclude the case $\{B \neq 0\ \textrm{and} \ \lambda_N > 0 \}$, because the positiveness of $\lambda_N$ implies that $\mathcal{I} = \emptyset$. Consequently, $B = 0$\ or $\lambda_N = 0$ and we get $\lambda_N\|B\|_* + \lambda_L\|W\circ B\|_1 = 0$. This reduces the situation to the logistic regression and gives
\begin{equation}
\lim_{c\rightarrow +\infty} F\big(\,c(B, \beta)\,\big) = \lim_{c\rightarrow +\infty} \sum_{i=1}^n f_{c,i}\big(\,\langle A_i, B\rangle + X_i\beta\big),\qquad \textrm{for} \quad  f_{c,i}(x): = \ln\big(\,1+exp(cx)\,\big) - cy_ix.
\end{equation}
Now, for all $i=1,\ldots,n$,\ it holds \ \  $f_{c,i}(x) \geq 0 $\ for all $x\in\mathbb{R}$ and
\begin{equation*}
\big\{\,y_i = 0\,\big\} \implies  \lim_{c \rightarrow +\infty} f_{c,i}(x) = 
\begin{cases}
0 \quad \quad \ ,\ \textrm{if}\ \ x < 0 \\
\ln(2)\ , \  \textrm{if}\ \ x = 0 \\
\infty\qquad, \textrm{if}\ \  x > 0 \\
\end{cases}, \quad \ \
\big\{\,y_i = 1\,\big\}\implies \lim_{c \rightarrow +\infty} f_{c,i}(x) = 
\begin{cases}
0 \quad \quad \ ,\ \textrm{if}\ \ x > 0 \\
\ln(2)\ , \  \textrm{if}\ \ x = 0 \\
\infty\qquad, \textrm{if}\ \  x < 0 \\
\end{cases}.
\end{equation*}
Suppose, for contradiction, that $r$ is a direction in which $F$ decreases. Then for all $i=1,\ldots,n$, we must have
$$\big\{\,y_i = 0\implies\langle A_i, B\rangle + X_i\beta \leq 0\, \big\}\qquad \textrm{and}\qquad \big\{\,y_i = 1\implies \langle A_i, B\rangle + X_i\beta \geq 0\, \big\}$$
and there exists $i_0 \in \{1,\ldots,n\}$ such that 
$$\big\{\,y_{i_0} = 0\ \textrm{and} \ \langle A_{i_0}, B\rangle + X_{i_0}\beta < 0\, \big\}\qquad \textrm{or}\qquad \big\{\,y_{i_0} = 1\ \textrm{and} \ \langle A_{i_0}, B\rangle + X_{i_0}\beta > 0\, \big\}.$$ 
This contradicts the assumption of the absence of the separability conditions~{\eqref{separabilityCond}--\eqref{separabilityCond2}}.

Now consider the case when there exists $(i,j) \notin \mathcal{I}$ such that $B(i,j) \neq 0$. Since $-\ell(cB, c\beta) \geq 0$, for any $c\in \mathbb{R}$ we have:
\begin{equation*}
\begin{aligned}
F\big(\,c(B, \beta)&\,\big)\ \ \geq\ \ \lambda_N\|cB\|_* + \lambda_L\|W\circ cB\|_1 = \\
&|c| \big(\lambda_N\|B\|_* + \lambda_L\|W\circ B\|_1\big)\ \ \geq\ \ |c| \big(\,\lambda_N\|B\|_* + \lambda_L  W(i,j) |B(i,j)|\,\big). 
\end{aligned}
\end{equation*}
Since $\|B\|_*>0$ and $\lambda_L W(i,j) + \lambda_N > 0$, then also $\lambda_N\|B\|_* + \lambda_L W(i,j) |B(i,j)| > 0$. This implies
\begin{equation}
    \lim_{c\rightarrow +\infty} F(cB, c\beta) = \infty,
\end{equation}
hence $r$ is not a direction of recession. 
\subsection{Proof of Proposition~\ref{SymmetricSolution}}
\label{subsec:Prop22}
Suppose that $\big(B^*, \beta^*\big)$ is a solution to the minimization problem defined in equation~\eqref{LogisticSpinner}. Define the symmetrized matrix $\tilde{B} := \frac{1}{2}(B^* + B^{*\top})$. Then, for each observation $i = 1, \dots, n$, the linear predictor satisfies:
\begin{equation*}
\eta_i(\tilde{B},  \beta^*)\ \ =\ \  \Big\langle A_i, \frac{B^* + B^{*\top}}{2} \Big\rangle + X_i  \beta^*\ \ =\ \  \frac12 \langle A_i, B^* \rangle +  \frac12 \langle A_i, B^{*\top} \rangle + X_i  \beta^*\ \  =\ \  \eta_i(B^*,  \beta^*),
\end{equation*}
where we used the identity $\langle A_i, B^{*\top} \rangle = \langle A_i^\top, B^* \rangle = \langle A_i, B^* \rangle$ assuming that $A_i$ is symmetric. 

Consequently, $\ell(\tilde{B},  \beta^*) \ =\ \ell(B^*, \beta^*)$. The rest of the proof is identical as in~\cite{brzyski2024spinner} and uses the fact that the nuclear and $\ell_1$ norms are transpose-invariant, while $W$ is symmetric matrix.

\subsection{Proof of Proposition~\ref{PermutingSolution}}
\label{subsec:Prop23}
The proof of the statement is analogous to the one presented in~\cite{brzyski2024spinner}. 

We first observe that the matrix inner product $\langle A,\ B\rangle := \sum_{j,\ l} A_{j,l}B_{j,l}$ satisfies
\begin{equation}
    \langle A^{\pi},\ B \rangle = \sum_{j,l} (A^{\pi})_{j,l}B_{j,l} = \sum_{j,l}A_{\pi(j),\pi(l)}B_{j,l} = \sum_{j^{\prime},l^{\prime}}A_{j^{\prime},l^{\prime}}B_{\pi^{-1}(j^{\prime}),\pi^{-1}(l^{\prime})} = \langle A,\ B^{\pi^{-1}} \rangle,
\end{equation}
hence\quad $\eta_i(B, \beta\ |\ A_i^{\pi}) = \eta_i(B^{\pi^{-1}},\beta\ |\ A_i)\ = \eta_i(B^{\pi^{-1}},\beta)$.

The same reasoning holds for the $\ell_1$ norm of the Hadamard product: $\| W^{\pi} \circ B\|_1 = \| W \circ B^{\pi^{-1}} \|_1$. The nuclear norm of a matrix is invariant to any simultaneous permutation of rows and columns, since if $A = U\Lambda V^\top$ then $PAP^\top = P(U\Lambda V^\top)P^\top = PUP^\top(P\Lambda P^\top)P V^\top P^\top$, i.e., the set of singular values is unchanged.

Putting this all together, we obtain:
\begin{equation}
\label{permutationInvariance}
\begin{aligned}
F\big(&B,  \beta\ |\ A_i^{\pi}, W^{\pi}\big) =  \\
&- \sum_{i=1}^n \Big[y_i \eta_i(B^{\pi^{-1}}, \beta) - \ln\Big(1 + \exp\big(\eta_i(B^{\pi^{-1}}, \beta)\big)\Big)\Big]+\lambda_N \big\|B^{\pi^{-1}}\big\|_*+\lambda_L \big\| W \circ B^{\pi^{-1}}\big\|_1 \ =\ F\big(B^{\pi^{-1}}, \beta\ \big).
\end{aligned}
\end{equation}
Since \eqref{permutationInvariance} is true for all $(B,  \beta) \in \mathbb{R}^{p \times p} \times \mathbb{R}^d $, the optimal value of the permuted problem must equal that of the original problem:
\begin{equation}
\min_{B,\, \beta}\, F(B,  \beta\ |\ A_i^{\pi}, W^{\pi})\ =\ \min_{B,\, \beta}\, F(B, \beta)\ =\ F(\hat{B}, \hat{\beta}).
\end{equation}
Finally, applying equation~\eqref{permutationInvariance} to $\big(\widehat{B}^\pi, \widehat{\beta}\big)$, we get $F\big(\widehat{B}^\pi, \widehat{\beta}\ |\ A_i^{\pi}, W^{\pi}\big) = F\big(\hat{B}, \hat{\beta}\big)$, which confirms the optimality of $\big(\widehat{B}^\pi, \widehat{\beta}\big)$ and concludes the proof.











\end{document}